\def\BibTeX{{\rm B\kern-.05em{\sc i\kern-.025em b}\kern-.08em
    T\kern-.1667em\lower.7ex\hbox{E}\kern-.125emX}}
\newtheorem{definition}{Definition}
\newtheorem{theorem}{Theorem}
\newtheorem{property}{Property}
\begin{document}

\title{Shape-based Evaluation of Epidemic Forecasts\\
}

\author{\IEEEauthorblockN{Ajitesh Srivastava}
\IEEEauthorblockA{
\textit{University of Southern California}\\
Los Angeles, United States of America \\
ajiteshs@usc.edu}
\and
\IEEEauthorblockN{Satwant Singh}
\IEEEauthorblockA{
\textit{University of Southern California}\\
Los Angeles, United States of America \\
satwants@usc.edu}
\and
\IEEEauthorblockN{Fiona Lee}
\IEEEauthorblockA{\textit{Stanford Online High School} \\
United States of America \\
leefiona@ohs.stanford.edu}
}

\maketitle
\IEEEpubidadjcol

\begin{abstract}
Infectious disease forecasting for ongoing epidemics has been traditionally performed, communicated, and evaluated as numerical targets -- 1, 2, 3, and 4 week ahead cases, deaths, and hospitalizations. While there is great value in predicting these numerical targets to assess the burden of the disease, we argue that there is also value in communicating the future trend (description of the shape) of the epidemic -- for instance, if the cases will remain flat or a surge is expected. To ensure what is being communicated is useful we need to be able to evaluate how well the predicted shape matches with the ground truth shape. Instead of treating this as a classification problem (one out of $n$ shapes), we define a transformation of the numerical forecasts into a ``shapelet''-space representation. In this representation, each dimension corresponds to the similarity of the shape with one of the shapes of interest (a shapelet). We prove that this representation satisfies the property that two shapes that one would consider similar are mapped close to each other, and vice versa. We demonstrate that our representation is able to reasonably capture the trends in COVID-19 cases and deaths time-series.
With this representation, we define an evaluation measure and a measure of agreement among multiple models. We also define the shapelet-space ensemble of multiple models as the mean of their shapelet-space representations. We show that this ensemble is able to accurately predict the shape of the future trend for COVID-19 cases and trends. We also show that the agreement between models can provide a good indicator of the reliability of the forecast.  
\end{abstract}

\begin{IEEEkeywords}
Time-series forecasting, infectious disease forecasting, ensemble, trend evaluation.
\end{IEEEkeywords}

\section{Introduction}
Infectious disease forecasting and its communication is a crucial aspect of epidemic management. There have been various collaborative efforts for short-term forecasting during epidemics including Influenza seasons~\cite{centers2019flusight}, COVID-19~\cite{noauthor_covid-19_nodate,europe_forecast_hub,lab_covid-19_2020}, and Zika~\cite{cdc_epidemic_2016}, where multiple research groups independently generate forecasts that are combined to inform the public. Traditionally, the forecasts are communicated as time-series with prediction intervals. Evaluation and communication are closely related -- forecasts are rigorously evaluated to ensure that what is being communicated is meaningful. These evaluations are often performed by treating the problem as a regression problem -- comparing predicted values (e.g., incident hospitalizations and deaths) over a horizon (e.g., 1, 2,3, and 4 week ahead) to the observed values. 

\begin{figure}[!htb]
\centering
  \includegraphics[width=0.48\textwidth]{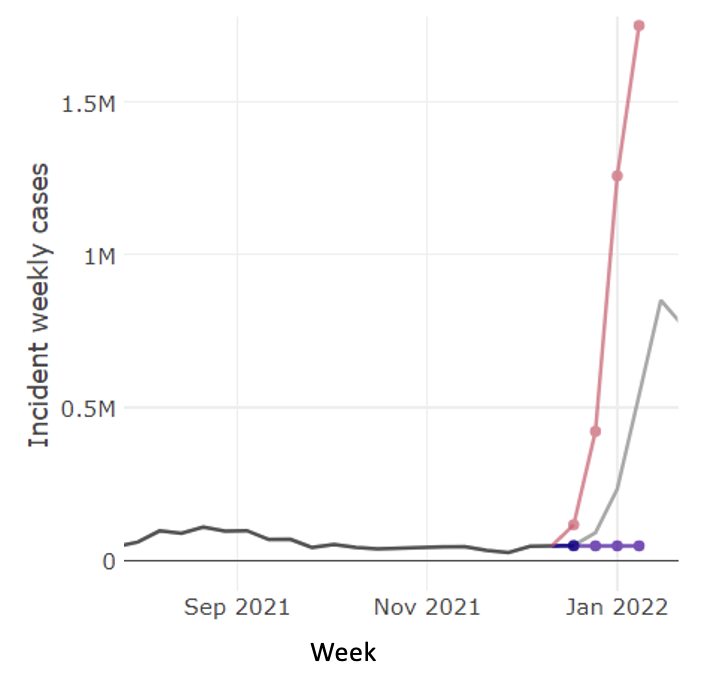}
  \caption{Value in Predicting Patterns: The purple forecast always predicts a flatline. The red forecast is able to predict the existence of a surge (grey line) but overpredicts the intensity. The red forecast is considered worse even though it has more useful information about the future.}
  \label{fig:motivation1}
\end{figure}
\begin{figure}[!htb]
\centering
  \includegraphics[width=0.48\textwidth]{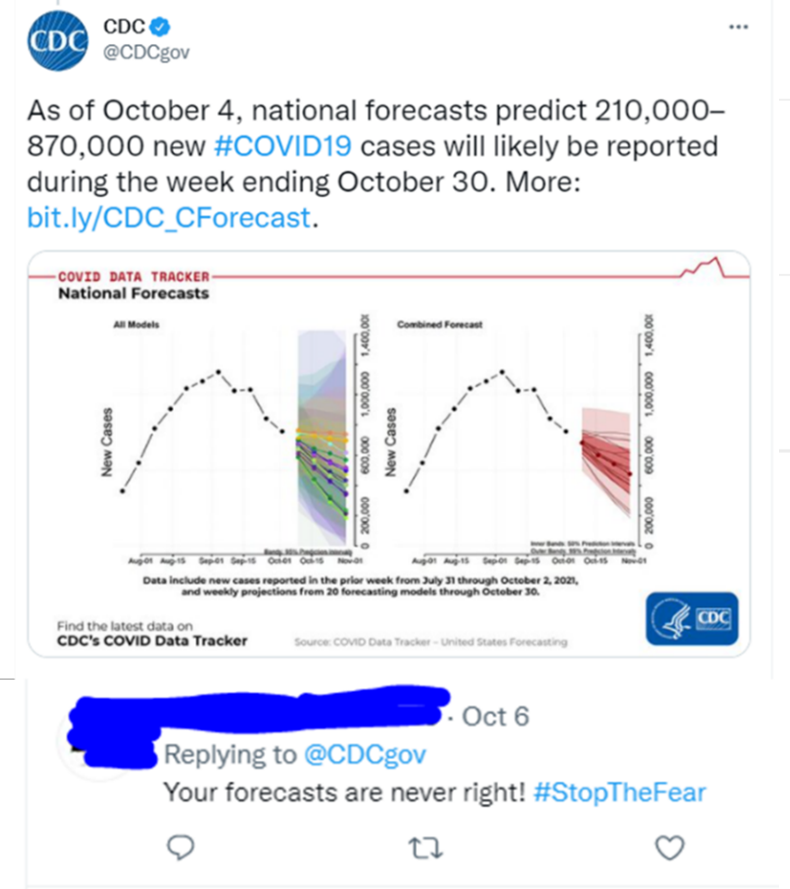}
  \caption{Communication of numerical forecasts can be overwhelming to the general public and may create distrust. While there is disagreement among the models regarding the exact numbers, there is agreement on the trend.}
  \label{fig:motivation2}
\end{figure}

While communicating time-series forecasts has value for assessing the burden of the disease, we argue that there is also value in communicating and evaluating a qualitative description of the future trend. This is due to the following reasons. First, the general public may be interested in knowing the future trends, such as ``is there going to be a surge'', and ``when will the cases turn around''.
Second, evaluation metrics for regression may not always convey the usefulness of a model. For instance, consider a model $M_1$ that is always able to predict if a surge is going to happen, but overestimates its severity (Figure~\ref{fig:motivation1}). Consider another model $M_2$ that always predicts a flat forecast -- each of the future weeks will have the same incidence as the recent week. The popular metric mean absolute error may assign a lower error to $M_2$ even though $M_1$ conveys more useful information than $M_2$.
Third, noise due to reporting delays affects the observed number of incident values that may deviate from the true incident values. This could penalizing forecasts that were even closer to the true incidents but slightly different from the noisy reported incidents. However, a qualitative description over a horizon (say ``increase'' or ``decrease'') is less sensitive to such noise.
Finally, there can be frequent disagreement between exact numbers in the time series predictions and uncertainty across models. Without an understanding of the model assumptions and approaches, the results can be overwhelming and may create an impression in the general public that the models are unreliable (see Figure~\ref{fig:motivation2}). 
\begin{figure}[!htbp]
\centering
  \includegraphics[width=0.48\textwidth]{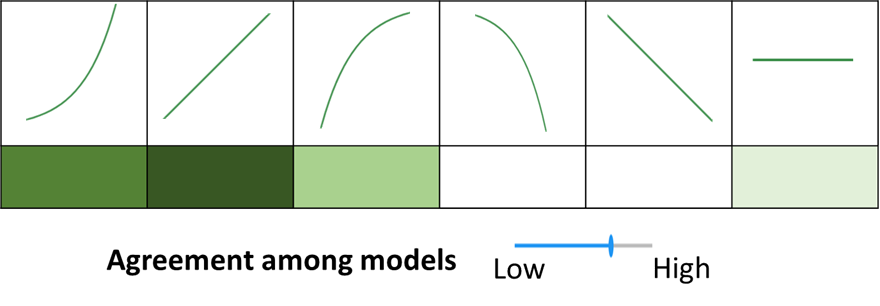}
  \caption{An example of communication of qualitative description of epidemic forecasts .}
  \label{fig:comm}
\end{figure}

To address the evaluation of forecasts based on a qualitative description, we propose a novel shapelet-based appraoch. In this approach, we define a set of $d$ shapes of interest called shapelets. The given time-series is then transformed into a $d$-dimensional space where each dimension encodes the similarity of the time-series to a shapelet. This shapelet-space representation allows us to compare two time series, which we can use to define evaluation measures as well as agreement across models. Since each dimension corresponds to a particular shapelet, the representation is interpretable and can be directly used for communication. As an example, Figure~\ref{fig:comm} shows a potential way to communicate the results to the public, where a ``heatmap'' represents the value associated with each dimension (shapelet) in the predicted shapelet-space.
Specifically, our contributions are as follows. (1) We propose a novel interpretable  time-series representation to evaluate short-term forecasts using the predicted shape and the ground truth shape. (2) We prove that our representation satisfies desirable properties that would map similar shapes to near points. (3) We demonstrate that the ensemble of shapes generated by transforming individual model predictions into our representation is able to predict the future shape more accurately than the shape obtained directly from the COVIDhub ensemble~\cite{cramer2022evaluation}. Specifically, we demonstrate that our ensemble of shapes can predict a shift in trend significantly better than the COVIDhub ensemble. (4) We also demonstrate that agreement among the models is a good indicator of reliability of the predictions. Our code to reproduce our results is available on Github\footnote{\url{https://github.com/Satwant-Singh-ADS/Shapelet_Methods}}.

\section{Related Work}
Various measures of errors have been used in infectious disease forecasting depending on the specific tasks as described below.

\textbf{Point forecasts} refer to the forecasting of one number for each ground truth of the future, such as the predicted number of new cases on a future date and declaring a future date when the peak is expected to happen. For such forecasts, mean absolute error (MAE) is defined for each prediction $\hat{y}_i$ corresponding to the ground truth $y_i$, $i \in \{1, \dots, n\}$,and a variation of mean percentage absolute error called symmetric MAPE or SMAPE~\cite{shcherbakov2013survey} is defined as:
\begin{equation}
    MAE = \sum_i \frac{|y_i - \hat{y}_i|}{n}\,.
\end{equation}
\begin{equation}
    SPAME = \frac{1}{n}\sum_i \frac{|y_i - \hat{y}_i|}{0.5|y_i + \hat{y}_i|}\,.
\end{equation}

Variations of both MAE and MAPE are widely used in time-series forecasting~\cite{chatfield2000time}. Particularly, MAE is the evaluation preferred by the CDC for point forecasts~\cite{cdc_epidemic_2016}.

\textbf{Probabilistic event forecasts} refer to the forecasting of a probability distribution for a ground truth event of the future, i.e., assigning a probability to each discrete possibility. 
For such forecasts, a log score~\cite{gneiting2007strictly} has been used by the CDC to evaluate real-time submissions of Flu forecasts~\cite{cdc_epidemic_2016}.
\begin{equation}
    LS= \frac{1}{n}\sum_i \max \{\ln{P(E_i)}, -10\}\,,
\end{equation}
where $P(E_i)$ is the probability assigned to the event $E_i$ that is observed in the ground truth. If the assigned probability is so low that $\ln{P(E_i)}$ is less than $-10$ or undefined, it is replaced by $-10$. This ensures that one significantly poor score does not affect the average. A higher score is preferred.

\textbf{Interval forecasts} refer to the reporting of a range with a confidence interval that suggests the likelihood of the true value falling in the range. One way to evaluate prediction intervals is ``coverage"~\cite{ray_ensemble_2020-1}, which measures the percentage of time the observed value falls within the provided interval for the given confidence (such as 95\% confidence interval). Other ways of evaluating interval forecasts and, more generally, quantile forecasts while penalizing long ranges also exist in the literature~\cite{gneiting2007strictly}. Currently, weighted interval score (WIS) is being widely used for the evaluation of COVID-19 forecasts and Influenza forecasts~\cite{bracher2021evaluating}.

For 1,2,3 and 4 week ahead forecasts MAE and WIS are the measures currently in use for FluSight and COVID-19 forecasting. Both measures are designed to numerically match the predictions and ground truth and are not designed to compare shapes.
To the best of our knowledge, no evaluation approach exists for infectious disease forecasting for comparing shapes of the predicted trends.
In time-series literature, ``shapelets'' have been used to capture various motifs that occur in time-series~\cite{ye2009time}. Here, we define a shapelet as a fundamental shape of interest and define a representation of any shape given by w-week ahead forecasts based on its similarity to our chosen shapelets.

\section{Methodology}
\begin{figure*}[!htbp]
\centering
  \includegraphics[width=0.8\textwidth]{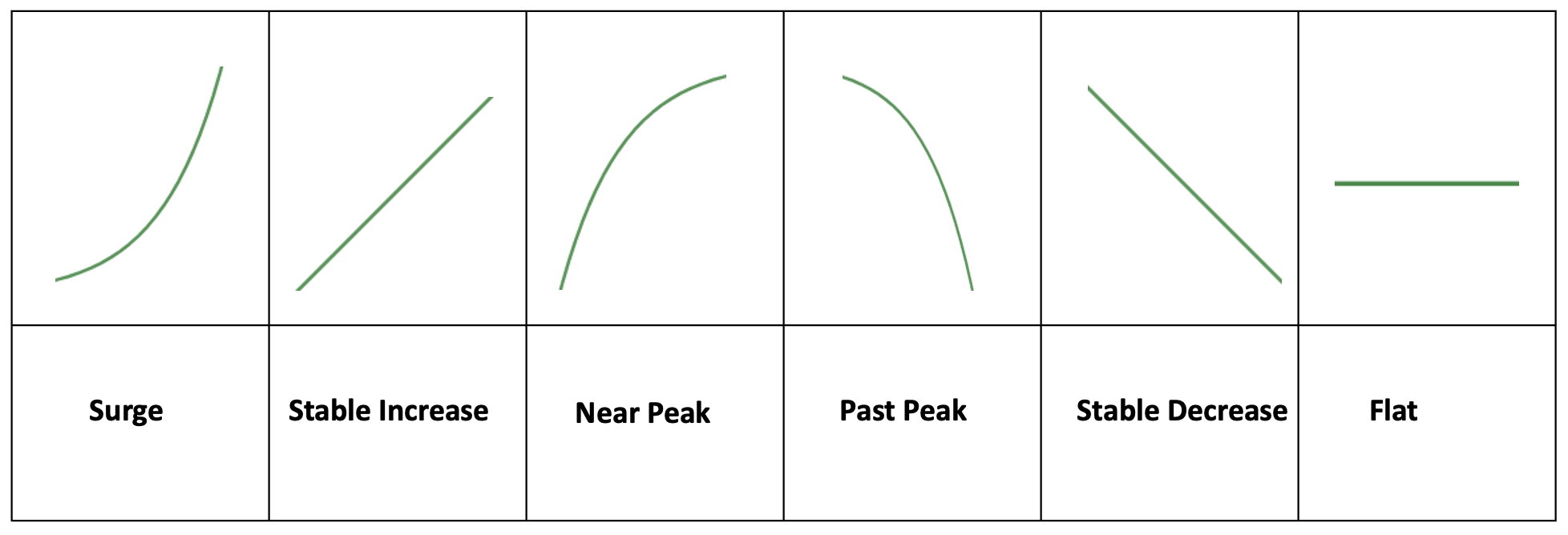}
  \caption{Shapelets of Interest}
  \label{fig:shapelets}
\end{figure*}

We define a shapelet as a vector that represents one of the shapes of interest in the trend of an epidemic. 
\begin{definition}[Shapelet]
 A shapelet $\mathbf{s} = [s_1, \dots, s_w] \in \mathbb{R}^w$ is a vector that represents a shape of interest.
\end{definition}
Figure~\ref{fig:shapelets} shows an example of a set of shapelets. One can treat the evaluation of forecasts as a classification problem. This would entail assigning the ground truth and the model prediction to one of the shapelets. 
However, treating this as a ``one out of $n$'' classification can incorrectly penalize forecasts. For example, classifying a ``surge'' as a ``stable increase'' should be penalized less than classifying it as a ``stable decrease'', as a ``surge'' is closer to the former but very different than the latter. Instead of classification, we assume that every shape has a representation determined by its similarity with each of the shapelets. Then two shapes are considered similar if their representations are close to each other. Prediction performance is measured based on the similarity of the predicted shape with the ground-truth shape based on these representations.

\begin{definition}[Shapelet-space Representation]
Given $d$ shapelets $\{\mathbf{s_1}, \dots \mathbf{s_d}\}$, we define a shapelet-space as a $d$-dimensional space where each point $P_x = (p_1, p_2, \dots, p_d)$ denotes a shape given by vector $\mathbf{x}$ and co-ordinate $p_i = sim(\mathbf{x}, p_i)$ for some measure of similarity.
\end{definition}

\subsection{Desired Properties}
Let us consider the following property which may be desirable. Later, we will show that an additional property is needed for a sensible representation.
\begin{property}\label{prop:1}
Two vectors have similar representation, i.e., $|f(\mathbf{x}) - f(\mathbf{y})| \leq \delta$, for some small $\delta > 0$, if and only if $\mathbf{x}$ can be approximated by scaling and translating $y$. A weaker form of the property is that $f(\mathbf{x}) = f(\mathbf{y})$ if and only if $\mathbf{x}$ can be obtained by scaling and translating $y$.
\end{property}
We will show that choosing Pearson Correlation as the similarity measure $sim$ satisfies this property.
Consider five of the six shapelets shown in Figure~\ref{fig:shapelets}, described by the following for a 4-week ahead forecasting task:
\begin{itemize}
    \item Surge: $[1, 2, 4, 8]$
    \item Stable Increase: $[1, 2, 3, 4]$
    \item Near Peak: $[-1, -0.5, -0.25, -0.125]$
    \item Past Peak: $[-1, -2, -4, -8]$
    \item Stable Decrease: $[4, 3, 2, 1]$
\end{itemize}

Consider a matrix $B$ such that each row corresponds to a shapelet above after a scaled standard normalization $\mathbf{s'} = \frac{\mathbf{s - \mu_s}}{\sqrt{w}\sigma_s}$, where $w$ is the length of each shapelet (here, $w = 4$). Here, $\mu_s$ is the mean of the entries in $\mathbf{s}$ and $\sigma_s$ is the standard deviation. Consider a vector $\mathbf{z}\in \mathbb{R}^w$ and its scaled standard normal form $\mathbf{z'}$ as defined before. Then, for shapelet $\mathbf{s}$, the Person Correlation between $\mathbf{z}$ and $\mathbf{s}$ is given by the dot product of $\mathbf{z'}$ and $\mathbf{s'}$. Further, $B\mathbf{z'}$ provides the representation of $z$ in the shapelet-space determined by the shapelets above.
\begin{theorem}\label{thm:prop1}
Given the shapelet-space as described above, Pearson Correlation satisfies the weaker form of Property~\ref{prop:1}. More precisely, two vectors are mapped to the same point if and only if one is obtained by translation and scaling of the other.
\end{theorem}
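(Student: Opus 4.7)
The plan is to work with the normalized vector $\mathbf{z'} = (\mathbf{z}-\mu_z\mathbf{1})/(\sqrt{w}\sigma_z)$ throughout. As described in the setup, the representation of $\mathbf{z}$ is the image $B\mathbf{z'}$, where $B$ is the matrix whose rows are the normalized shapelets. To prove the iff, I would split the map $\mathbf{z}\mapsto B\mathbf{z'}$ into two pieces and analyze them separately: the normalization $\mathbf{z}\mapsto\mathbf{z'}$, which should be invariant under positive-scaling-plus-translation, and the linear map $\mathbf{v}\mapsto B\mathbf{v}$ restricted to the subspace swept out by the normalization.

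For the ``if'' direction, suppose $\mathbf{y}=a\mathbf{x}+b\mathbf{1}$ with $a>0$. Then $\mu_y=a\mu_x+b$ and $\sigma_y=a\sigma_x$, so direct substitution shows $\mathbf{y'}=\mathbf{x'}$, and hence $B\mathbf{y'}=B\mathbf{x'}$. For the ``only if'' direction, suppose $B\mathbf{x'}=B\mathbf{y'}$. Both $\mathbf{x'}$ and $\mathbf{y'}$ lie in the subspace $V=\{\mathbf{v}\in\mathbb{R}^w:\mathbf{1}^T\mathbf{v}=0\}$ of mean-zero vectors, which has dimension $w-1=3$. It therefore suffices to show that $B|_V$ is injective, i.e., that the rows of $B$ span a $3$-dimensional subspace of $V$. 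I would verify this by taking three specific normalized shapelets (e.g., the normalized Surge, Stable Increase, and Near Peak) and checking their linear independence via a direct $3\times 3$ determinant on any three columns; the normalized Past Peak and Stable Decrease are simply $-1$ times the normalized Surge and Stable Increase, so they are redundant, but the three remaining rows already span $V$. Injectivity on $V$ then forces $\mathbf{x'}=\mathbf{y'}$.

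Unwinding the definitions from $\mathbf{x'}=\mathbf{y'}$ yields $\mathbf{y}-\mu_y\mathbf{1}=(\sigma_y/\sigma_x)(\mathbf{x}-\mu_x\mathbf{1})$, so $\mathbf{y}=a\mathbf{x}+b\mathbf{1}$ with $a=\sigma_y/\sigma_x>0$ and $b=\mu_y-a\mu_x$, completing the characterization. The main obstacle is the linear independence check in the ``only if'' direction: it is the only step that depends on the specific choice of shapelets and would have to be redone if the shapelet set were changed. Conceptually, the theorem says that Pearson correlation quotients $\mathbb{R}^w$ exactly by the positive affine group, and that the chosen shapelets are rich enough to distinguish all the remaining directions in $V$.
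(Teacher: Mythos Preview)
Your proposal is correct and follows essentially the same approach as the paper: factor the map through the normalization $\mathbf{z}\mapsto\mathbf{z'}$, then argue that $B$ is injective on the mean-zero subspace. The only cosmetic difference is that the paper phrases the injectivity step as ``the null space of $B$ is $\operatorname{span}\{\mathbf{1}\}$, and the mean-zero constraint forces $\lambda=0$,'' whereas you restrict to $V$ from the outset and check that three rows of $B$ span it; these are equivalent formulations of the same rank-$3$ fact.
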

\begin{proof}
 We will first show that two vectors are mapped to the same point if and only if their standard normal forms are identical. Then, we will show that an identical standard normal form implies that one is obtained by translation and scaling of the other.
 
Consider vectors $\mathbf{x}, \mathbf{y} \in \mathbb{R}^{w}$, such that their standard normal forms $\mathbf{x'} = \mathbf{y'}$. Then $B\mathbf{x'} = B\mathbf{y'}$, and so they are mapped to the same point in the shapelet-space.

Now assume that for some $\mathbf{x'}$ and $\mathbf{y'}$, $B\mathbf{x'} = B\mathbf{y'}$. Then, $B(\mathbf{x'} - \mathbf{y'}) = \mathbf{0}$. Therefore, $\mathbf{x'} - \mathbf{y'}$ belongs to the null space of $B$. It can be shown that the null space of $B$ is the span of the vector $\mathbf{v} = [0.5, 0.5, 0.5, 0.5]$. So, for some $\lambda \in \mathbb{R}$, 
\begin{equation}\label{eqn:null}
    \mathbf{x'} - \mathbf{y'} = \lambda \mathbf{v}\,.
\end{equation}
Recall that $\mathbf{x'}$ and $\mathbf{y'}$ have zero mean. Taking average of each vector in Equation~\ref{eqn:null}, we get:
\begin{equation}
    0 - 0 = 0.5\lambda \implies \lambda = 0\,.
\end{equation}
Using the value of $\lambda$ in Equation~\ref{eqn:null}, we get
\begin{equation}
    \mathbf{x'} - \mathbf{y'} = \mathbf{0} \implies \mathbf{x'} = \mathbf{y'}\,,
\end{equation}

Two vectors $x$ and $y$ will be mapped to the same point in shapelet-space, if and only if they have the same standard normal form, i.e., 
\begin{equation}
    \frac{\mathbf{x} - \mu_x}{\sqrt{w}\sigma_x} = \frac{\mathbf{y} - \mu_y}{\sqrt{w}\sigma_y} \implies \mathbf{x} = \sigma_x\frac{\mathbf{y'} - \mu_y}{\sigma_y} + \mu_x\,.
\end{equation}
Therefore, given a shape, one can construct a new shape that will map to the same point by choosing $\mu_x \in \mathbb{R}$ and $\sigma_x \in \mathbb{R}^{+}$ corresponding to a translation and a scaling operation, respectively.
\end{proof}
\begin{theorem}\label{cor:prop1}
Given the shapelet-space described by the five shapelets above, two vectors are mapped close to each other if only if one vector is approximately obtained by translating and scaling the other.
\end{theorem}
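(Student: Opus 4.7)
The plan is to upgrade Theorem~\ref{thm:prop1} from an exact statement about the kernel of $B$ to a quantitative, Lipschitz-style statement about $B$ acting on the orthogonal complement of its kernel. The forward direction follows from the boundedness of the linear map $B$ acting on scale-and-translation invariant coordinates; the backward direction requires showing that $B$ is bounded below on the relevant subspace so that small image distances force small preimage distances. I will work with the standard-normal forms $\mathbf{x'}, \mathbf{y'}$ throughout, since the shapelet-space map factors as $\mathbf{x} \mapsto \mathbf{x'} \mapsto B\mathbf{x'}$.

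For the forward direction, suppose $\mathbf{x} = \sigma\mathbf{y} + \mu\mathbf{1} + \mathbf{\epsilon}$ for some $\sigma > 0$, $\mu \in \mathbb{R}$, and a small residual $\mathbf{\epsilon}$. Because the standard-normal form is invariant to positive scaling and translation, $\mathbf{y'} = (\sigma\mathbf{y} + \mu\mathbf{1})'$, and a short perturbation calculation yields $\|\mathbf{x'} - \mathbf{y'}\| \leq C_{\mathbf{y}} \|\mathbf{\epsilon}\|$ for a constant depending on $\sigma_y$ and $\|\mathbf{y}\|$. Applying the bounded linear operator $B$ then gives
\begin{equation}
\|B\mathbf{x'} - B\mathbf{y'}\| \;\leq\; \|B\|_{\mathrm{op}} \, C_{\mathbf{y}} \, \|\mathbf{\epsilon}\|,
\end{equation}
so the representations are close whenever $\|\mathbf{\epsilon}\|$ is small.

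For the backward (and more delicate) direction, assume $\|B\mathbf{x'} - B\mathbf{y'}\| \leq \delta$. Since both $\mathbf{x'}$ and $\mathbf{y'}$ are zero-mean by construction, the difference $\mathbf{x'} - \mathbf{y'}$ lies in $\mathbf{1}^{\perp}$, which by Theorem~\ref{thm:prop1} is precisely the orthogonal complement of the null space of $B$. On this complement $B$ is injective, hence its smallest singular value $\sigma_{\min}(B\!\mid_{\mathbf{1}^{\perp}})$ is strictly positive and can in principle be computed from the five explicit shapelets. This gives the quantitative bound
\begin{equation}
\|\mathbf{x'} - \mathbf{y'}\| \;\leq\; \frac{\delta}{\sigma_{\min}(B\!\mid_{\mathbf{1}^{\perp}})}.
\end{equation}
Unwinding the standard-normal form (as in the final display of Theorem~\ref{thm:prop1}) then shows that $\mathbf{x}$ equals $(\sigma_x/\sigma_y)(\mathbf{y} - \mu_y\mathbf{1}) + \mu_x\mathbf{1}$ up to a residual of norm $O(\delta \sigma_x / \sigma_{\min})$, i.e., $\mathbf{x}$ is approximated by a scaling and translation of $\mathbf{y}$.

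The main obstacle is the careful handling of the two different notions of ``close'' across the two normalization stages. The map from $\mathbf{x}$ to $\mathbf{x'}$ divides by $\sigma_x$, so a perturbation that is small in the standard-normal space can correspond to a larger perturbation in the original space whenever $\sigma_x$ is large, and vice versa. I would address this by formulating ``approximately obtained by scaling and translating'' in terms of the scale-free quantity $\|\mathbf{x'} - \mathbf{y'}\|$, which makes both directions come out cleanly and avoids dimensional inconsistencies; then the constants $\|B\|_{\mathrm{op}}$ and $\sigma_{\min}(B\!\mid_{\mathbf{1}^{\perp}})$ supply the if-and-only-if with explicit numerical bounds.
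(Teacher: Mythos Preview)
Your proposal is correct and follows the same overall strategy as the paper: work in standard-normal coordinates, bound $\|B(\mathbf{x'}-\mathbf{y'})\|$ above by a constant times $\|\mathbf{x'}-\mathbf{y'}\|$ for the forward direction, and bound it below for the backward direction using the injectivity of $B$ on $\mathbf{1}^\perp$ established in Theorem~\ref{thm:prop1}. The execution of the backward direction differs slightly. You invoke the smallest singular value $\sigma_{\min}(B|_{\mathbf{1}^\perp})$ abstractly, whereas the paper constructs an explicit $4\times 4$ invertible matrix $C$ by selecting three independent rows of $B$ and appending the row $[1,1,1,1]$ (which encodes the zero-mean constraint on $\mathbf{x'}-\mathbf{y'}$), then bounds $\|\mathbf{x'}-\mathbf{y'}\| \le \|C^{-1}\|_F\,\|\mathbf{\delta}\|$ with the computed value $\|C^{-1}\|_F \approx 28.71$. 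Your singular-value formulation is the standard linear-algebra argument and in principle yields a tighter constant; the paper's construction is more hands-on and produces an explicit number without computing an SVD on a subspace. Your discussion of how perturbations in the original versus standard-normal coordinates relate through $\sigma_x$ is also more careful than the paper's, which simply works in normalized coordinates throughout and leaves that translation implicit.
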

\begin{proof}
Suppose one vector is approximately obtained by translating and scaling the other. Then, in our notation: $\mathbf{x'} - \mathbf{y'} = \mathbf{\epsilon}$ for some small $\| \epsilon\|$. Then, the vectors are mapped to nearby points as shown by 
\begin{align*}
    \| B\mathbf{x'} - B\mathbf{y'}\| = \|B\mathbf{\epsilon}\| \leq \|\mathbf{\epsilon}\|\,.
\end{align*}
Here the last inequality follows from applying Cauchy-Schwarz inequality to the dot product of each row of $B$ with the vector $\mathbf{\epsilon}$. 


Now, suppose $\mathbf{x'}$ and $\mathbf{y'}$ are mapped to nearby points, i.e., $B(\mathbf{x'} - \mathbf{y'}) = \mathbf{\delta}$, such that $\|\mathbf{\delta}\|$ is small. Given $\mathbf{\delta}$, we will attempt to find $\mathbf{z} = \mathbf{x'} - \mathbf{y'}$. The sum of elements of $\mathbf{z}$ must be $0$ as the sum of elements of both $\mathbf{x'}$ and $\mathbf{y'}$ are 0.
We note that the rank of our matrix $B$ is 3. We construct a matrix $C$ by picking 3 independent rows $r_1$, $r_2$ and $r_3$ from $B$ (these correspond to ``surge'', ``stable increase'', and ``near peak''). We append an additional row $[1, 1, 1, 1]$. Similarly, we construct another vector $\mathbf{\delta'}$ by picking the elements $r_1$, $r_2$, and $r_3$ from $\mathbf{\delta}$. Observe that $\mathbf{z}$ must satisfy 
\begin{equation*}
    C\mathbf{z} = \mathbf{\delta'}\,.
\end{equation*}
We find that the matrix $C$ is invertible. Therefore given $\mathbf{\delta}$, after constructing $\mathbf{\delta}$, there exists a unique $\mathbf{z}$ given by:
\begin{align*}
    \mathbf{z} &= C^{-1}\mathbf{\delta'} \\
    \implies \|\mathbf{x'} - \mathbf{y'}\| &\leq \|C^{-1}\|_F \|\mathbf{\delta'}\|\,
    \leq \|C^{-1}\|_F \|\mathbf{\delta}\|\,.
\end{align*}
The last inequality follows from the fact that $\mathbf{\delta'}$ is constructed by selecting elements from $\mathbf{\delta}$. We find that $\|C^{-1}\|_F = 28.71$. Therefore, when $\|\mathbf{\delta}\|$ is small, the distance between the standard normal forms of the vectors is also small. 
\end{proof}

\noindent\textbf{Consequence of choosing non-orthogonal shapelets.} Note that the vectors in the row of matrix $B$ are non-orthogonal. How does this affect a sense of distance? To understand this, note that $\|B\mathbf{x'} - B\mathbf{y'}  \|^2 = | (B\mathbf{x'})^T(B\mathbf{x'}) + (B\mathbf{y'})^T(B\mathbf{y'}) - 2(B\mathbf{x'})^T(B\mathbf{y'}) |$. Consider the singular value decomposition~\cite{klema1980singular} of $B = U \Gamma V^T$, where $U, V$ are unitary matrics (i.e., their columns are orthogonal unit vectors), and $\Gamma$ is a rectangular diagonal matrix. Then,
\begin{align*}
    &(B\mathbf{x'})^T(B\mathbf{y'}) = (U\Gamma V^T\mathbf{x'})^T (U\Gamma V^T\mathbf{y'})\\
    &= \mathbf{x'}^T V \Gamma^T (U^T U) \Gamma V^T \mathbf{y'}
    = \mathbf{x'}^T V \Gamma^T \Gamma V^T \mathbf{y'} \\
    &= (\Gamma V^T\mathbf{x'})^T (\Gamma V^T\mathbf{x'})
\end{align*}
Therefore, $\|B\mathbf{x'} - B\mathbf{y'}  \|^2 = \|\Gamma(V^T\mathbf{x'} - V^T\mathbf{y'})\|^2$, i.e.,  the distance in our transformation is equivalent to a weighted distance in a transformation obtained by projections on a set of orthogonal unit vectors.

\noindent\textbf{Consideration of the ``flat'' shapelet.} Note that we have not considered the ``flat'' shapelet in the above analysis ($\mathbf{s_i} = [0, 0, 0, 0]$. This is an important shapelet to consider as it indicates a lack of significant change in dynamics. However, unlike other shapelets, ``flat'' is scale dependent -- a line with a negligible non-zero slope that should be considered flat can have an arbitrarily high slope depending on the scaling. Intuitively, the a shape is to be considered similar to a non-flat shapelet only if there is a ``significant'' change within the $w$-element vector $\mathbf{x}$. Therefore, we modify Property~\ref{prop:1}.

\begin{property}\label{prop:2}
Two vectors have similar representation, if and only if (i) none of the vectors are ``almost flat'' and one can be approximately obtained by scaling and translating the other, or (ii) both vectors are ``almost flat''.  
\end{property}

To find the shapelet-space representation, we first identify how similar it is to what we could consider ``flat'', and then update the similarities of the shape with respect to other shapelets. Mathematically, in the given scale, for some constants $m_0, \beta \geq 0$, we define ``flatness'' as 
\begin{displaymath}
       \phi =
        \left\{\begin{array}{@{}cl}
                1, & \text{if } m \leq m_0,\\
                \exp(-\beta(m-m_0)),   & \text{if } m \geq m_0.
        \end{array}\right.
\end{displaymath}
Here $m$ is the average absolute slope of the vector $\mathbf{x}$ whose shapelet-space representation is desired, i.e., if $\mathbf{x} = (x_1, x_2, x_3, x_4)$, then $m = (|x_2 - x_1| + |x_3-x_2| + |x_4 - x_3|)/3$. The constant $m_0$ represents the threshold of the average slope below which the shape needs to be considered perfectly ``flat'', i.e., it is not to be considered as even partially similar to other shapelets. The constant $\beta$ represents how quickly above the threshold $m_0$, the ``flatness'' should reduce. As an example, suppose we have scaled the time-series so that a slope of 1 should have a small flatness $\varepsilon$. Also, suppose we have set $m_0 = 0$. Then $\beta = -\ln \varepsilon$.
Now, the co-ordinates of shapelet-space representation are defined as
\begin{displaymath}
       sim(\mathbf{x}, \mathbf{s_i}) =
        \left\{\begin{array}{@{}cl}
                2\phi-1, & \text{if } \mathbf{s_i} \text { is ``flat''},\\
                (1-\phi)corr(\mathbf{x}, \mathbf{s_i}),   & \text{otherwise}.
        \end{array}\right.
\end{displaymath}

The above formulation ensures that the representation is not significantly impacted by scaling when the shape is not close to ``flat''. 

\begin{theorem}\label{thm:prop2}
The shapelet-space representation described above satisfies Property~\ref{prop:2}
\end{theorem}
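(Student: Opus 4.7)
The plan is to prove both directions of the biconditional by a case analysis on the ``flat'' coordinate $2\phi - 1$, which now acts as a gate that controls how much the correlation-based coordinates contribute. Since $\phi \in (0,1]$, the flat coordinate lies in $(-1,1]$, attaining $+1$ exactly when the average absolute slope satisfies $m \le m_0$ and decaying toward $-1$ as $m$ grows. The remaining coordinates are $(1-\phi)\,corr(\mathbf{x},\mathbf{s_i})$, which are suppressed precisely when the shape is close to flat, so that ``flat'' vectors are essentially indistinguishable in those dimensions regardless of what correlations are computed.

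For the ``if'' direction, I would treat the two sub-cases of Property~\ref{prop:2} separately. If both $\mathbf{x}$ and $\mathbf{y}$ are almost flat, then $\phi_x,\phi_y \approx 1$, so the flat coordinates are both close to $1$ and every other coordinate is close to $0$ for both; the representations therefore agree on all dimensions. If neither is almost flat and one is approximately a scaling and translation of the other, then $\phi_x \approx \phi_y \approx 0$ (their slopes are comparable under translation/scaling by a positive factor), the flat coordinates nearly match, and the correlation-based coordinates reduce to $corr(\mathbf{x},\mathbf{s_i}) \approx corr(\mathbf{y},\mathbf{s_i})$, which is exactly the ``if'' direction of Theorem~\ref{cor:prop1} (close standard-normal forms give close $B\mathbf{x'}$).

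For the ``only if'' direction, assume the representations are close. The flat coordinate alone forces $|\phi_x - \phi_y|$ to be small, so I would split on the common magnitude of $\phi$. When both are near $1$, both vectors are almost flat and case (ii) holds. When both are near $0$, the factors $(1-\phi)$ are bounded away from $0$, so closeness of the correlation coordinates implies closeness of $corr(\mathbf{x},\mathbf{s_i})$ and $corr(\mathbf{y},\mathbf{s_i})$ for each $i$, and Theorem~\ref{cor:prop1} (the ``only if'' direction there) yields that $\mathbf{x'} \approx \mathbf{y'}$, i.e., case (i). The forbidden ``one flat, one not'' configuration is automatically ruled out because the flat coordinates would then differ by roughly $2$.

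The main obstacle will be the intermediate regime where $\phi_x \approx \phi_y$ but both lie strictly between $0$ and $1$. There the $(1-\phi)$ damping partially erases information in the correlation coordinates, weakening the implication from close representations to close standard-normal forms. A clean resolution is to interpret ``almost flat'' through an explicit threshold on $\phi$ (tied to the parameters $m_0,\beta$), so that cases (i) and (ii) of Property~\ref{prop:2} correspond to the two sides of this threshold and Theorem~\ref{cor:prop1} is invoked only where $1-\phi$ is bounded below. Since Property~\ref{prop:2} is qualitative (``close'' and ``almost''), the proof is a matching of small and large regimes rather than a sharp bound, and it inherits its constants from the already-computed $\|C^{-1}\|_F$ together with the chosen flatness threshold.
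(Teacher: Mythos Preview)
Your proposal follows essentially the same route as the paper's proof: a case split governed by the flat coordinate, with Theorem~\ref{cor:prop1} invoked for the non-flat regime in both directions, and the ``one flat, one not'' case excluded by the gap in the flat coordinate. One small correction: scaling \emph{does} change the average absolute slope, so $\phi_x\approx\phi_y\approx 0$ in case~(i) is not justified by slopes being ``comparable under translation/scaling'' but simply by the hypothesis that neither vector is almost flat (hence both $\phi$'s are already small); with that fix your argument matches the paper's, and your explicit discussion of the intermediate-$\phi$ regime is, if anything, more careful than the paper's own treatment.
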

\begin{proof}
If both vectors $\mathbf{x}$ and $\mathbf{y}$ are ``almost'' flat, then, by definition, $\phi_x$ and $\phi_y$ are close to $1$. Then, for both vectors, the similarity with flat is close to 1, while with others the similarity is close to zero, due to the $(1-\phi)$ term. Therefore, both vectors have similar representations in shapelet-space.

Next, we will show that if none of the vectors $\mathbf{x}$ and $\mathbf{y}$ are ``almost flat'' (i.e., $\phi$ is small) and both have an approximately similar standard normal form, then they are mapped close to each other.
Consider two vectors $\mathbf{x}$ and $\mathbf{y}$ with corresponding average slopes $m_x$ and $m_y$. Suppose, $m_x > m_y > 1$. Then, by our choice of $\beta$, $\phi_x < \phi_y < \varepsilon$. Therefore, $\phi_y - \phi_x < \varepsilon$. Therefore, when shapelet $\mathbf{s_i}$ ``flat''
\[
       |sim(\mathbf{y}, \mathbf{s_i}) - sim(\mathbf{x}, \mathbf{s_i})| =
                2(\phi_y-\phi_x) \leq 2\varepsilon,
                \]
Also, since they have approximately similar standard normal form, i.e., $\mathbf{x'} - \mathbf{y'} = \mathbf{\epsilon}$ with small $\|\mathbf{\epsilon}\|$, for non flat shapelet $s_i$, $corr(\mathbf{x}, \mathbf{s_i}) - corr(\mathbf{y}, \mathbf{s_i}) = \mathbf{s'_i}^T(\mathbf{y'} + \mathbf{\epsilon}) - \mathbf{s'_i}^T \mathbf{y'}$ . Recall that $\mathbf{s'_i}$ is the $i^{th}$ row of matrix $B$. So, for non-flat shapelets,

\begin{align}
        |sim(\mathbf{y}, \mathbf{s_i}) &- sim(\mathbf{x}, \mathbf{s_i})|  \nonumber\\ 
        &=|(1-\phi_y)corr(\mathbf{y}, \mathbf{s_i}) - (1-\phi_x)corr(\mathbf{x}, \mathbf{s_i})| \nonumber\\
        &=|(1-\phi_y)\mathbf{s'_i}^T \mathbf{y'} - (1-\phi_x)\mathbf{s'_i}^T (\mathbf{y'} + \mathbf{\epsilon})| \nonumber\\
        &= |(1-\phi_y - 1 + \phi_x)\mathbf{s'_i}^T\mathbf{y_i} + (1-\phi_x)\mathbf{s'_i}^T\mathbf{\epsilon}|  \nonumber\\
        &\leq\varepsilon |\mathbf{s'_i}^T\mathbf{y_i}| + (1-\phi_x)\|\mathbf{\epsilon}\| \leq \varepsilon + \|\mathbf{\epsilon}\|\,.
\end{align}
Therefore, 
\begin{align*}
|f(\mathbf{x}) - f(\mathbf{y})| &\leq \sqrt{(2\varepsilon)^2 + (d-1)(\varepsilon + \|\epsilon\|)^2}\,.
\end{align*}

Now we will show the converse. Consider vectors $\mathbf{x}$ and $\mathbf{y}$ such that $f(\mathbf{x} - \mathbf{y}) = \mathbf{\delta}$, where $\|\mathbf{\delta}\|$ is small. Since each dimension of $\mathbf{\delta}$ is small, the dimension corresponding to the flat shapelet is also small, i.e., for some $\epsilon$ 
\begin{equation}
|2(\phi_x)-1 - 2(\phi_y) + 1| = \epsilon \implies |\phi_x - \phi_y| = \epsilon/2\,.    
\end{equation}
Without loss of generality, we assume $\phi_y = \phi_x + \epsilon/2$. For non-flat shapelets, we have
\begin{align*}
    |\delta_i| &= |sim(\mathbf{y}, \mathbf{s_i}) - sim(\mathbf{x}, \mathbf{s_i})|\\
    &= |(1-\phi_y)corr(\mathbf{y}, \mathbf{s_i}) - (1-\phi_x)corr(\mathbf{x}, \mathbf{s_i})|\\
    &= |(1-\phi_x - \epsilon/2)corr(\mathbf{y}, \mathbf{s_i}) - (1-\phi_x)corr(\mathbf{x}, \mathbf{s_i})| \\
    &= |(1-\phi_x)(corr(\mathbf{y}, \mathbf{s_i}) - corr(\mathbf{x}, \mathbf{s_i})) - \frac{\epsilon}{2}corr(\mathbf{y}, \mathbf{s_i})|\\
    &\leq (1-\phi_x)|\mathbf{s'_i}^T(\mathbf{y} - \mathbf{x})| + \frac{\epsilon}{2}
\end{align*}
 The above holds, if $\phi_x$ is close to 1. Since $\phi_y = \phi_x + \epsilon/2$, this means $\phi_y$ is also close to 1, i.e., both $\mathbf{x}$ and $\mathbf{y}$ are almost flat. If $\phi_x$ is not close to 1, then $|\mathbf{s'_i}^T(\mathbf{y} - \mathbf{x})|$ must be small for all $i$. From Theorem~\ref{cor:prop1}, this leads to $\mathbf{x'}$ and $\mathbf{y'}$ being approximately equal.
 \end{proof}
 
 \noindent\textbf{Choosing the set of shapelets. } We have performed the analysis with five specific shapelets, along with the flat shapelet. However, based on the above analysis, we can identify how to select a set of shapelets so that Property~\ref{prop:2} is satisfied. For the task of $1, 2, ..., w$ ahead forecast, the proof of Theorem~\ref{thm:prop2} relies on Theorem~\ref{cor:prop1} that requires the matrix $C$ to be invertible. Recall that $C$ is constructed by picking $w-1$ rows from $B$ corresponding to its rank, and then appending a row $[1, 1, \dots, 1]$ (i.e., a row of $w$ $1s$). Note that if the rank of matrix B were to be $w$, then we would construct $C$ simply by picking $w$ independent rows from $B$. The appending of a row of $1s$ is no longer required as $C$ would already be invertible. The appending of the row only serves as a constraint that is known to be satisfied.
 Therefore, other than ``flat'', one should first select at least $w-1$ shapelets that are linearly independent vectors. Other shapelets may be linear combinations of these vectors. While the inclusion of other shapelets seems redundant, we allow them so that the interpretations can be generalized for any shapelet -- a high positive value at dimension $i$ means the predicted shape is close to shapelet $i$.     
 
\subsection{Performance Evaluation}
To evaluate the performance of $1, 2, \dots, w$ week ahead predictions by a model $\mathbf{x_i}$, we obtain its shapelet-space representation $f(\mathbf{x_i})$ and compare it to the shapelet-space representation $f(\mathbf{g})$ ground truth vector $\mathbf{g}$. The comparison is done through cosine similarity. Specifically, we define shapelet-space score as
\begin{equation}\label{eqn:score}
    SS(\mathbf{x_i}, \mathbf{g}) = \frac{f(\mathbf{x_i})f(\mathbf{g}) }{\|f(\mathbf{x_i})\|\|f(\mathbf{g})\|}.
\end{equation}
We favor cosine similarity over Euclidean distance for performance measure as cosine similarity results in a bounded measure in $[-1, 1]$. To assess the agreement between predictions from $n$ available models, we take the average pairwise similarity across the representations of all the available models. Specifically, we define inter-model agreement as
\begin{equation}\label{eqn:ima}
    IMA = \frac{2}{n(n-1)}\sum_{i>j}\frac{ f(\mathbf{x_i})f(\mathbf{x_j}) }{\|f(\mathbf{x_i})\|\|f(\mathbf{x_j})\|}.
\end{equation}        
Being as the average of cosine similarities, the agreement also lies in the interval $[-1, 1]$. It should be noted that, with multiple models, while achieving an agreement of $1$ is possible, achieving an agreement of $-1$ may not be possible. This is due to the fact that it is impossible to have more than two vectors such that all pairs point to the opposite direction in space.

\subsection{Shapelet-space Ensemble}

The traditional ensemble technique for short-term numerical epidemic forecasting is to take the mean or median of all the submitted forecasts. Instead, we want to capture the ensemble of shapes. Therefore, we define the shapelet-space based ensemble (in short, Shapelet Ensemble) as the centroid of the points given by the shapes of individual model forecasts. Mathematically, Shapelet Ensemble of $n$ models with numerical forecasts vector $\mathbf{x_i}$ is given by $(\sum_i f(\mathbf{x_i}) )/n$.

\section{Experiments}
The objective of the experiments is to demonstrate the following. (i) the proposed methods generate meaningful shapelet-space representations; (ii) Shapelet Ensemble produces good predictions of future shapes; and (iii) the inter-model agreement is a good indicator of the reliability of the forecasted shape.
\subsection{Data}
We conducted experiments on the short-term cases and deaths forecast submissions to the COVID-19 forecast hub~\cite{lab_covid-19_2020}. Each week, the models submitted 1, 2, 3, and 4 week-ahead forecasts at the state-level as well as the national level. The point forecasts form vectors of size $w = 4$. They also submitted quantiles for these targets. Evaluation of the probabilistic forecasts is left to future work. The ground truth was obtained from John's Hopkins COVID-19 dataset~\cite{noauthor_2019_nodate}. We used a moving average smoothing on the incident ground truth with window size 3. This was performed to induce smooth shapes in the ground truth, which will then be compared with the model forecasts.

\subsection{Shapelet Representation Analysis}
To compute the shapelet-space representation, we defined the flatness by setting $m_0 = 0$ and $\beta = -\log(0.1)/\theta$. Here, for the given time-series of cases or deaths, $\theta$ is the maximum average of the absolute value of the slope over 4-week vectors observed until Jun 28, 2020, for the given time-series of cases or deaths. This results in flatness of a shape with average absolute slope $m$ to be $(0.1)^{m/\theta}$. The motivation behind this choice was that by June 28, 2020, the US states had already seen an uptick in cases and deaths. We used the highest rate of change up to this day as a base slope $\theta$ that should get a low flatness value. Due to this choice, for any future slope $m = \theta$, the flatness will be evaluated as 0.1.
One may select a different set of $m_0$ and $\beta$ to fit their definition of what should be considered ``flat''. Other reasonable choices of $m_0$ and $\beta$ did not affect the conclusions of the experiments.

We computed the shapelet-space representation of the ground truth and identified the dimension with the highest value. This is annotated in Figure~\ref{fig:CA_annot}. For instance, a label $D$ indicates that the shape of the next four weeks has the highest value in the dimension corresponding to ``Decreasing'' in the shapelet-space representation. 
Observe that the annotations seem reasonable.  There are certain parts where the annotations switch quickly, e.g., a sequence of ``N''(Near peak) and ``I'' (Increase) at the beginning of the time-series. However, the overall representations at these points are very similar to each other with small differences at ``N'' and ``I'' dimensions. To demonstrate this we define trend continuity (TC) -- a measure of how similar one shape is to the previous shape in the time-series. Mathematically, we find the cosine similarity between consecutive shapes:
\begin{equation}
    TC(t) = \frac{f(\mathbf{g}(t-1))^T f(\mathbf{g}(t)) }{\|f(\mathbf{g}(t-1))\|\|f(\mathbf{g}(t))\|}.
\end{equation}
Here, $\mathbf{g}(t)$ is the ground truth shape determined by the vector of four elements (cases or deaths) at time $t+1, \dots, t+4$. Observe that at the early part of the time-series, while the annotations change, the trend continuity remains close to 1. We also observe that the trend continuity falls as the time-series goes through changes in trends.

We also observe that the inter-model agreement often remains high (Figure~\ref{fig:CA_agreement}). Even though the forecasts may disagree in terms of numerical targets, they often agree on the future shape.
Comparing with the trend continuity in Figure~\ref{fig:CA_annot}, the agreement seems to fall during the times of shift in trends.
\begin{figure*}[!htbp]
\centering
  \includegraphics[width=0.8\textwidth,trim={0 0 0 1.1cm},clip]{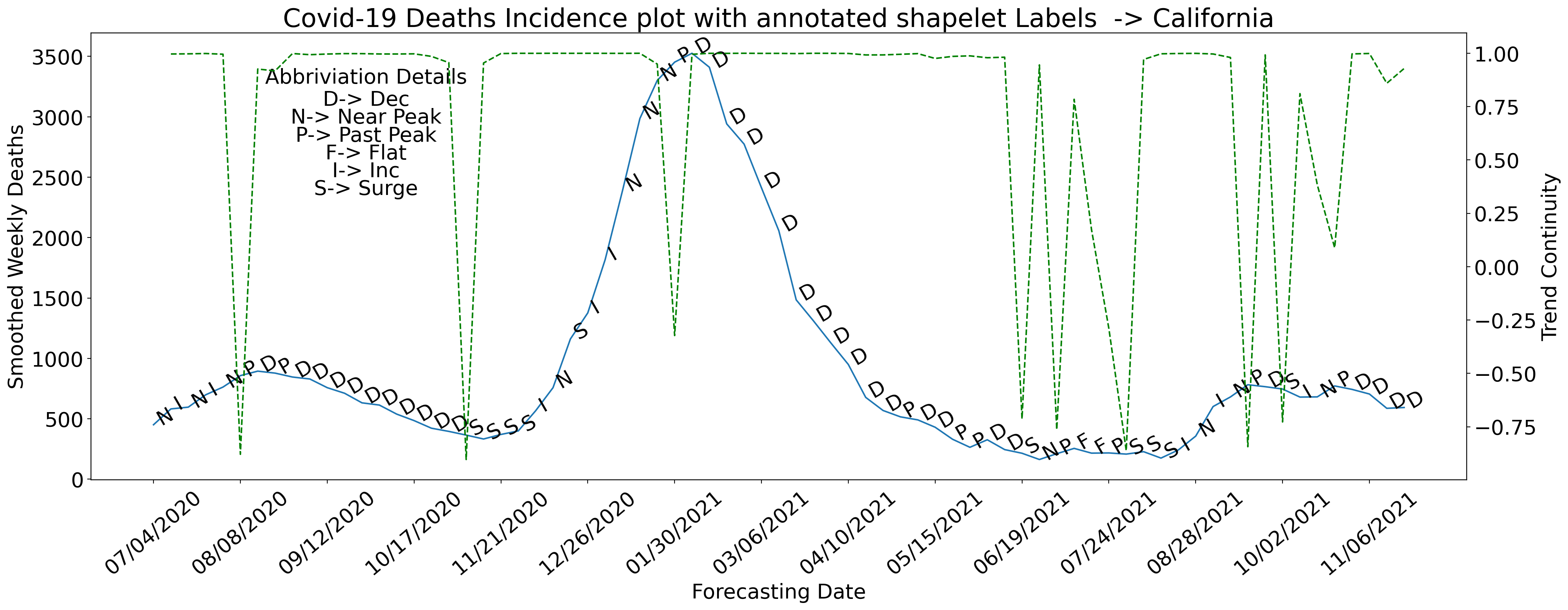}
  \caption{Shapelet-space dimension with highest value over time along with trend continuity (green curve) for COVID-19 weekly deaths in California.}
  \label{fig:CA_annot}
\end{figure*}

\begin{figure*}[!h]
\centering
  \includegraphics[width=0.8\textwidth,trim={0 0 0 1.1cm},clip]{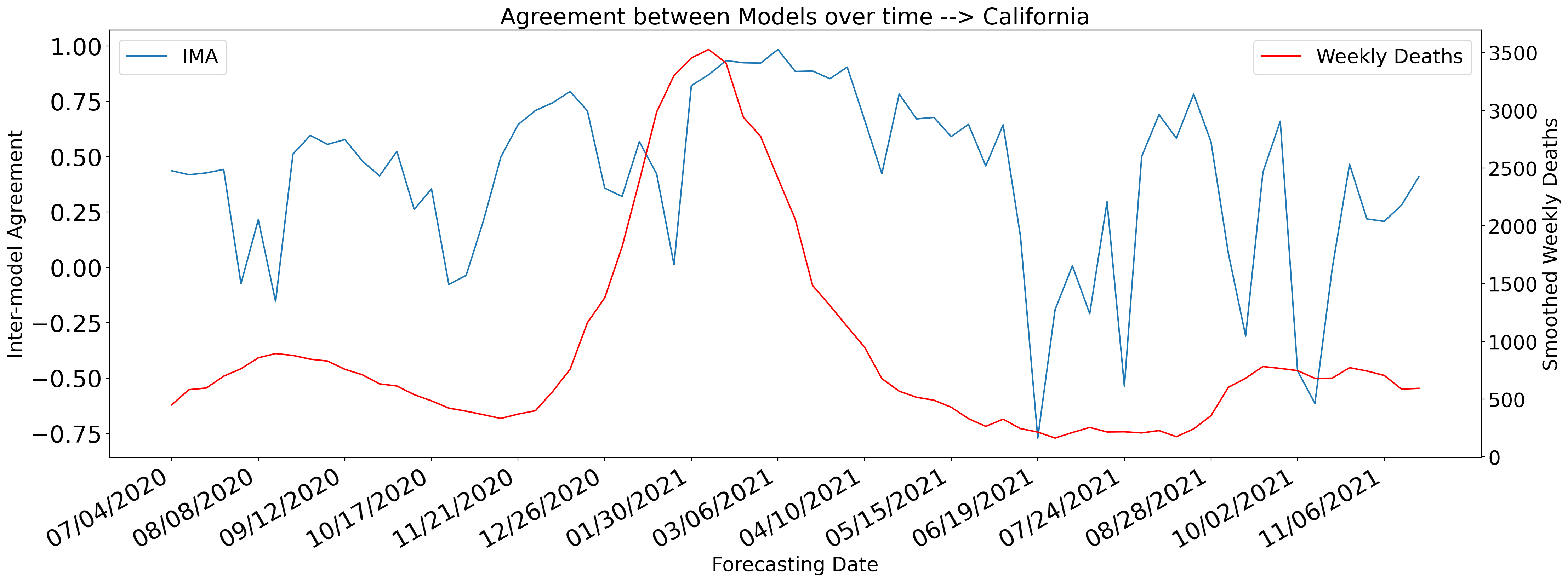}
  \caption{Measure of agreement between different models over time for death forecasting in California}
  \label{fig:CA_agreement}
\end{figure*}

\begin{figure*}[!h]
\centering
  \includegraphics[width=0.8\textwidth,trim={0 0 0 1.1cm},clip]{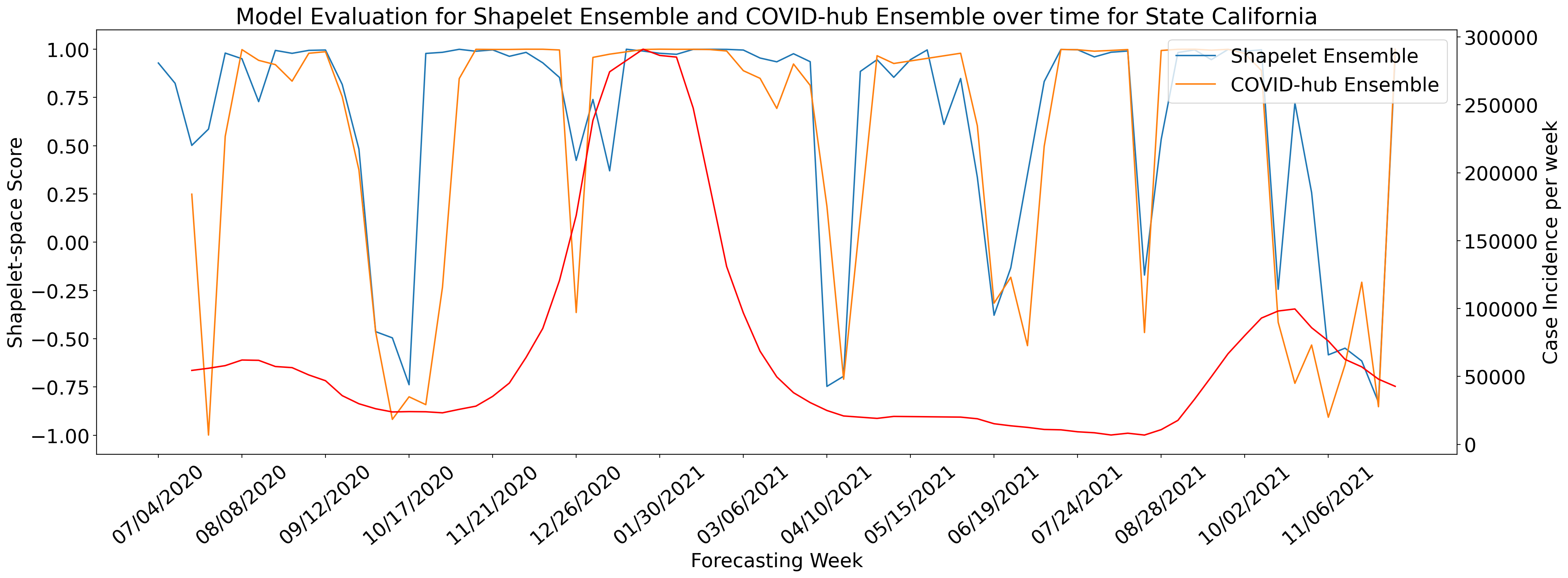}
  \caption{Performance of COVIDhub ensemble and Shapelet ensemble for forecasting shapes of cases in California.}
  \label{fig:CA_comparison_cases}
\end{figure*}

\begin{figure*}[!h]
\centering
  \includegraphics[width=0.8\textwidth,trim={0 0 0 1.1cm},clip]{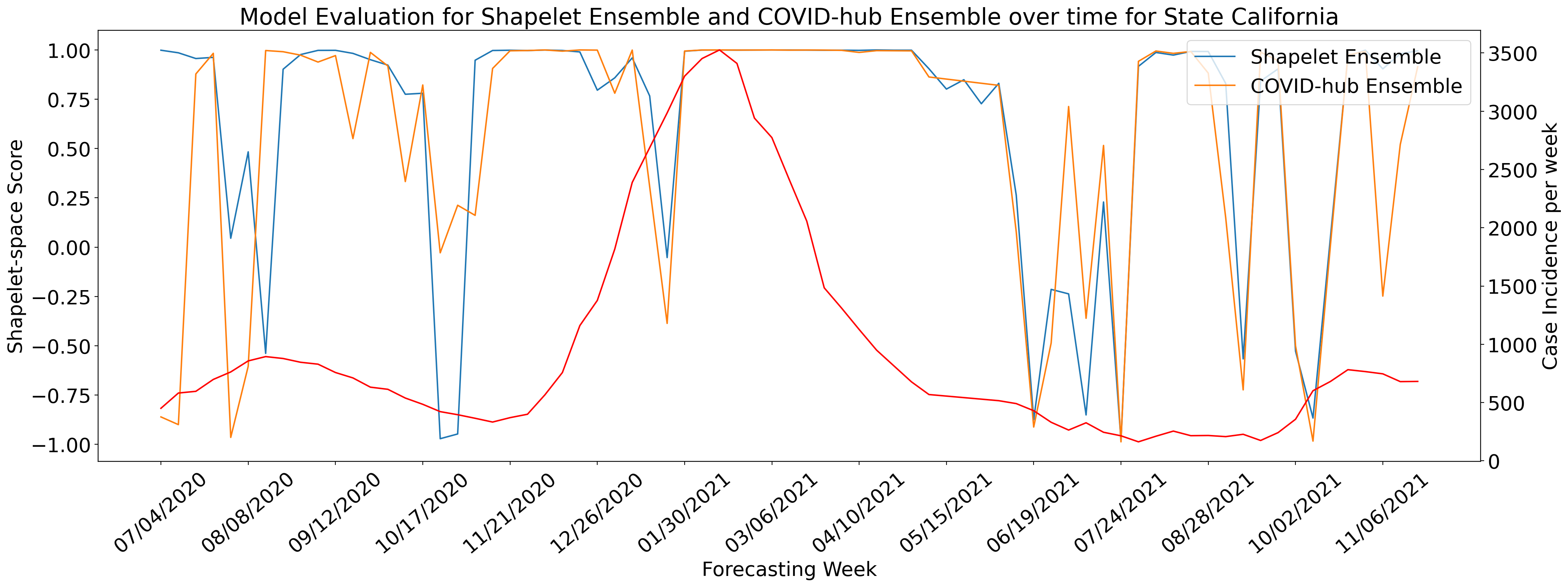}
  \caption{Performance of COVIDhub ensemble and Shapelet ensemble for forecasting shapes of deaths in California.}
  \label{fig:CA_comparison}
\end{figure*}

\begin{figure*}[!h]
\centering
  \includegraphics[width=0.8\textwidth,trim={0 0 0 1.1cm},clip]{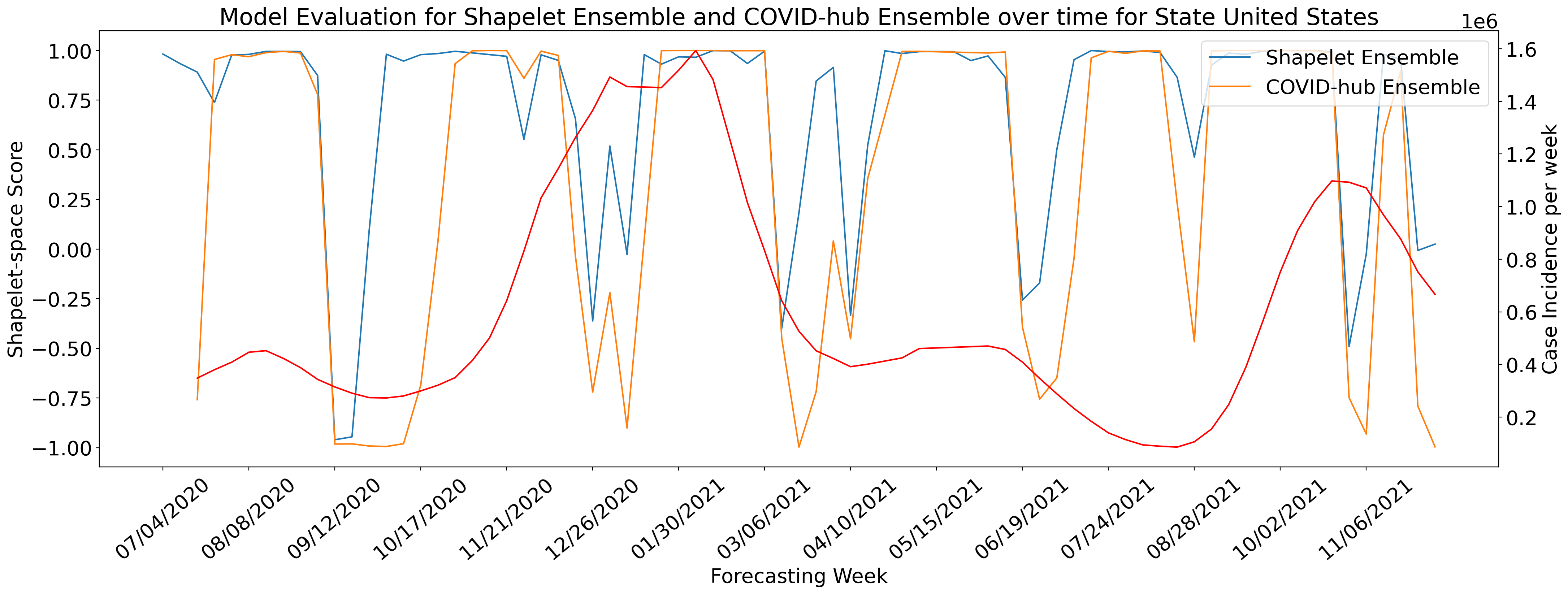}
  \caption{Performance of COVIDhub ensemble and Shapelet ensemble for forecasting shapes of cases at the national level.}
  \label{fig:US_comparison_cases}
\end{figure*}

\begin{figure*}[!h]
\centering
  \includegraphics[width=0.8\textwidth,trim={0 0 0 1.1cm},clip]{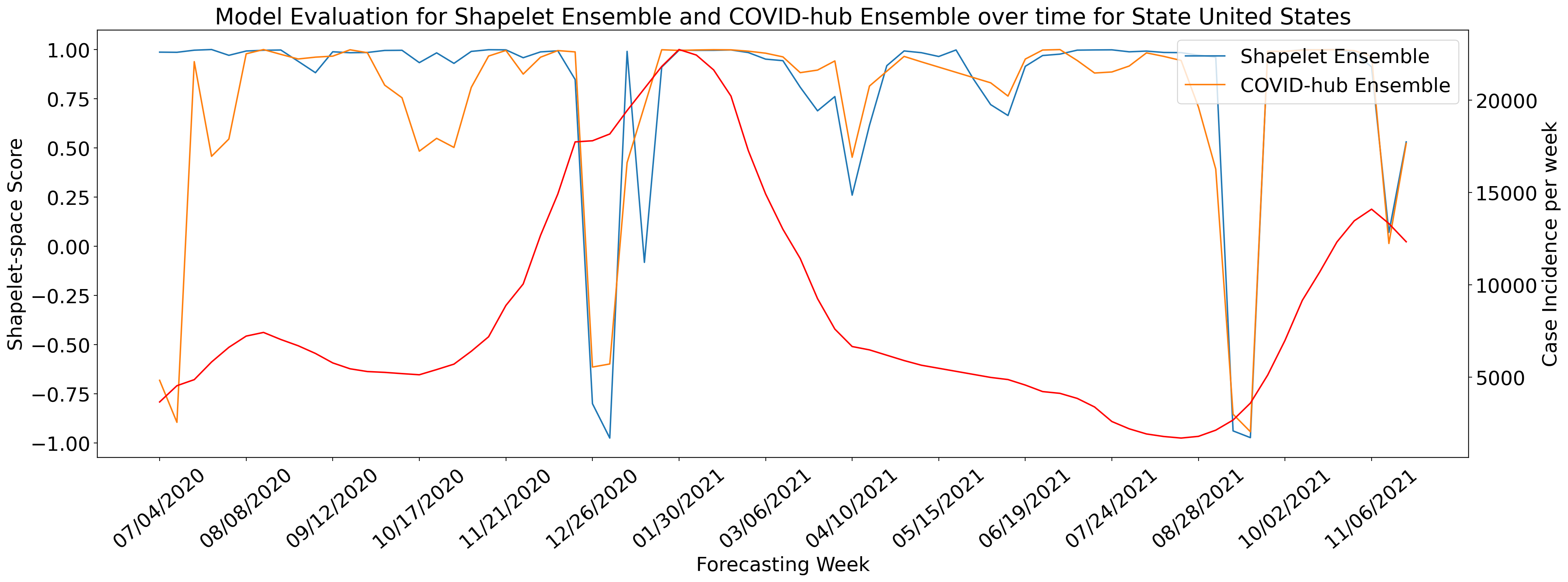}
  \caption{Performance of COVIDhub ensemble and Shapelet ensemble for forecasting shapes of deaths at the national level.}
  \label{fig:US_comparison}
\end{figure*}

\subsection{Performance Evaluation}
We computed the performance of all models based on shapelet-space scores across the US states for both weekly deaths and cases. We focus our evaluations on the two ensembles: (i) \textbf{COVIDhub Ensemble} - the shapelet-space representation of the COVID19hub ensemble which combines the numerical forecasts from all the submitted forecasts using mean or median; and (ii) \textbf{Shapelet Ensemble} - the mean of shapelet-space representation of all the submitted models. Figure~\ref{fig:CA_comparison_cases} shows the scores over time for national level case forecasts. Note that both models perform well for the majority of the time period, particularly for phases where the trend does not change. When there is a shift in trend, the performance of both models drop. However, the Shapelet Ensemble seems to catch the change earlier than COVIDhub Ensemble. A similar pattern is observed for other states and at the national level (Figure~\ref{fig:US_comparison_cases}). This suggests that the information on shifting trends is often present in the individual models but perhaps is lost when the mean/median of the numerical forecasts is taken. Figures~\ref{fig:CA_comparison} and~\ref{fig:US_comparison} show the scores for the two models for prediction of deaths. The earlier prediction of shifting trends by Shapelet Ensemble is not clear for deaths. This may be due to the fact that shifting trends in deaths follow observed shifting trends in cases (as cases cause deaths). Therefore, many models can predict the shape correctly, and so the COVIDhub Ensemble is able to capture the change.

Table~\ref{tab:results} shows the average performance of the models over all the weeks and all states. For both cases and deaths, Shapelet Ensemble performs significantly better, with the difference being higher for cases. To study how these models perform at different phases of the pandemic, we considered two types of trends: (i) Continued Trend - when trend continuity $TC(t) \geq 0$, and (ii) Changing Trend - when $TC(t) < 0$. Shapelet Ensemble has a much higher performance at times of Changing Trends compared to COVIDhub Ensemble. Particularly for cases, COVIDhub Ensemble has a poor performance close to 0 (recall that the range is $[-1, 1]$). 

\begin{table}[!ht]
\caption{Comparison of performance of Shapelet Ensemble vs COVIDhub Ensemble over all states (i) during all time points, (ii) during phases of changing trends, and (iii) during phases of continued trend}
\label{tab:results}
\begin{tabular}{llccc}
\hline
\multicolumn{1}{c}{\multirow{2}{*}{}} & \multicolumn{1}{c}{\multirow{2}{*}{Method}} & \multicolumn{3}{c}{Shapelet-space Score} \\
\multicolumn{1}{c}{} & \multicolumn{1}{c}{} & All & \begin{tabular}[c]{@{}c@{}}Changing\\ Trend\end{tabular} & \begin{tabular}[c]{@{}c@{}}Continued\\ Trend\end{tabular} \\
\hline
\multirow{2}{*}{cases} & Shapelet Ensemble & 0.57 & 0.42 & 0.58 \\
 & COVIDhub Ensemble & 0.40 & -0.025 & 0.45 \\
 \hline
\multirow{2}{*}{deaths} & Shapelet Ensemble & 0.67 & 0.77 & 0.66 \\
 & COVIDhub Ensemble & 0.60 & 0.55 & 0.60 \\
 \hline
\end{tabular}
\end{table}

\subsection{Agreement vs Performance}
In real-time forecasting, it is advantageous to know when the forecasts are reliable. We hypothesize that the inter-model agreement is a good indicator of forecast reliability. To assess the relationship between model performance based on shapelet-space score and inter-model agreement, we plot them for each state and each week as shown in Figure~\ref{fig:perf_vs_ag}. We notice that the points corresponding to the Shapelet Ensemble are more tightly concentrated around the diagonal. This implies that when we see a high agreement, we are more likely to see a high performance. Similarly, a low agreement is likely to produce a low performance. The same pattern is observed in cases as well as deaths.
\begin{figure*}[!ht]
\centering
     \begin{subfigure}[b]{0.45\textwidth}
         \centering
         \includegraphics[width=\textwidth]{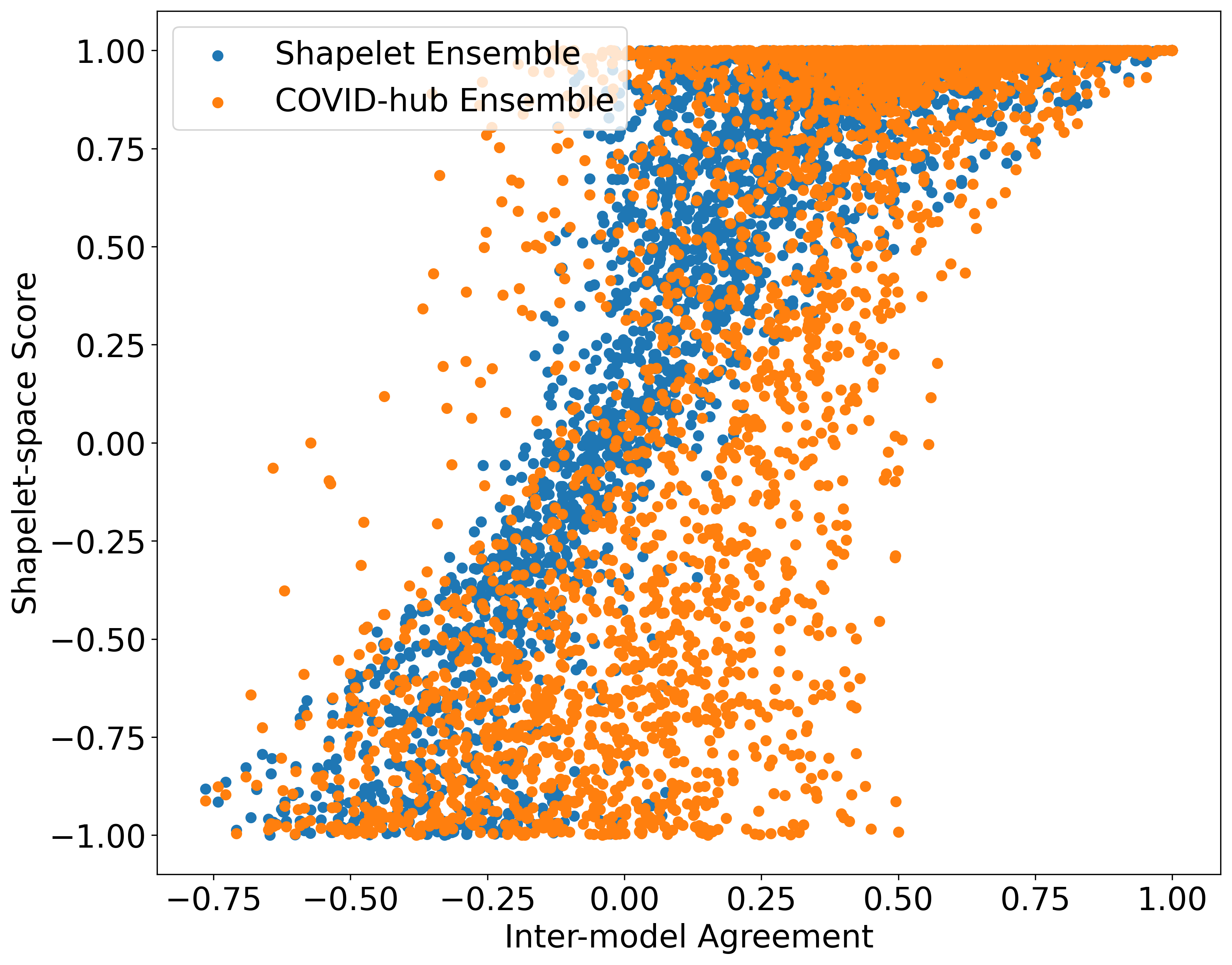}
         \caption{Cases}
         \label{fig:perf_vs_ag_cases}
     \end{subfigure}
     \begin{subfigure}[b]{0.45\textwidth}
         \centering
         \includegraphics[width=\textwidth]{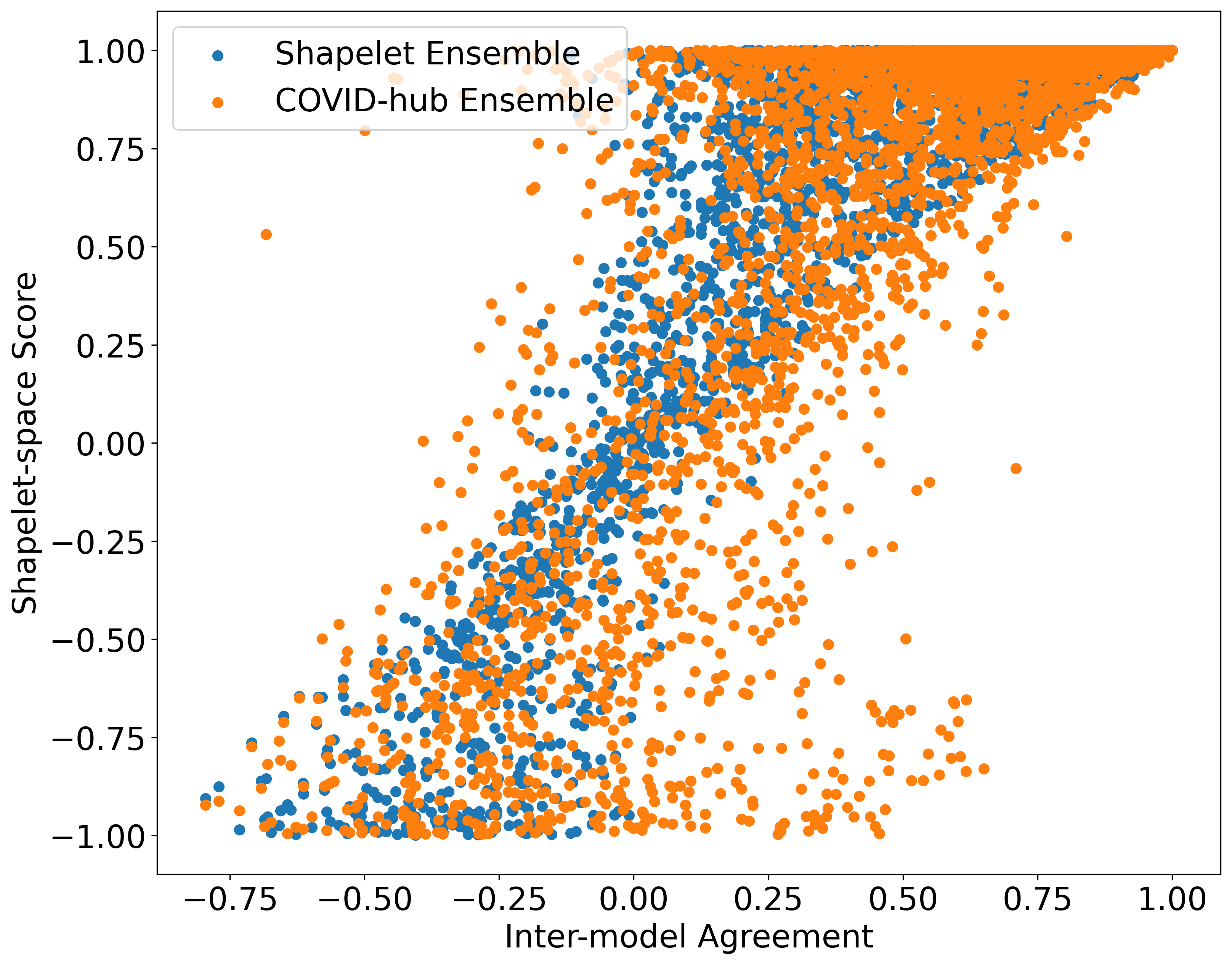}
         \caption{Deaths}
         \label{fig:perf_vs_ag_deaths}
     \end{subfigure}
        \caption{Shapelet-space Score vs Inter-model Agreement for the two models.}
        \label{fig:perf_vs_ag}
\end{figure*}
\begin{figure*}[!ht]
\centering
     \begin{subfigure}[b]{0.45\textwidth}
         \centering
         \includegraphics[width=\textwidth]{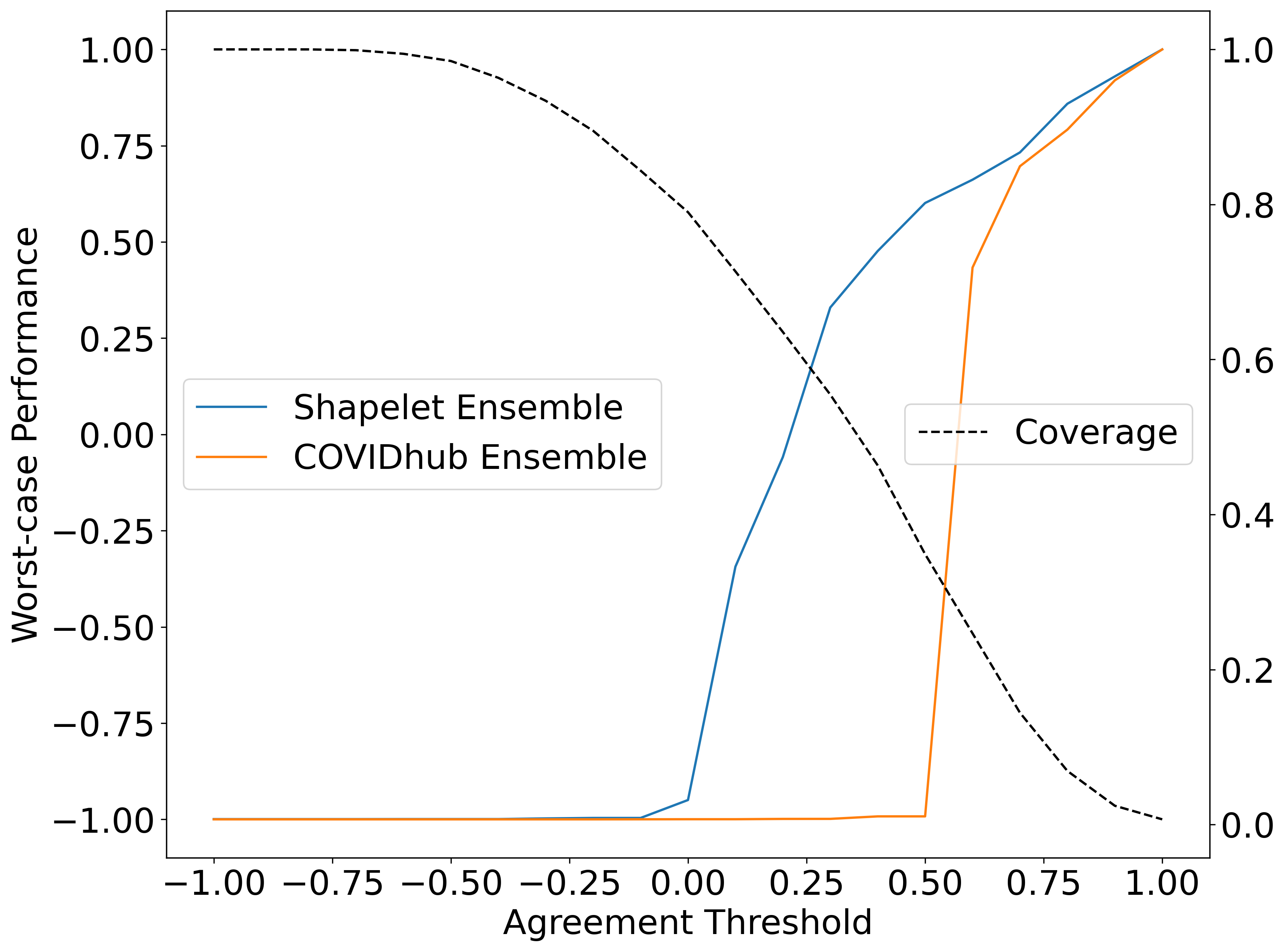}
         \caption{Cases}
         \label{fig:wperf_vs_ag_cases}
     \end{subfigure}
     \begin{subfigure}[b]{0.45\textwidth}
         \centering
         \includegraphics[width=\textwidth]{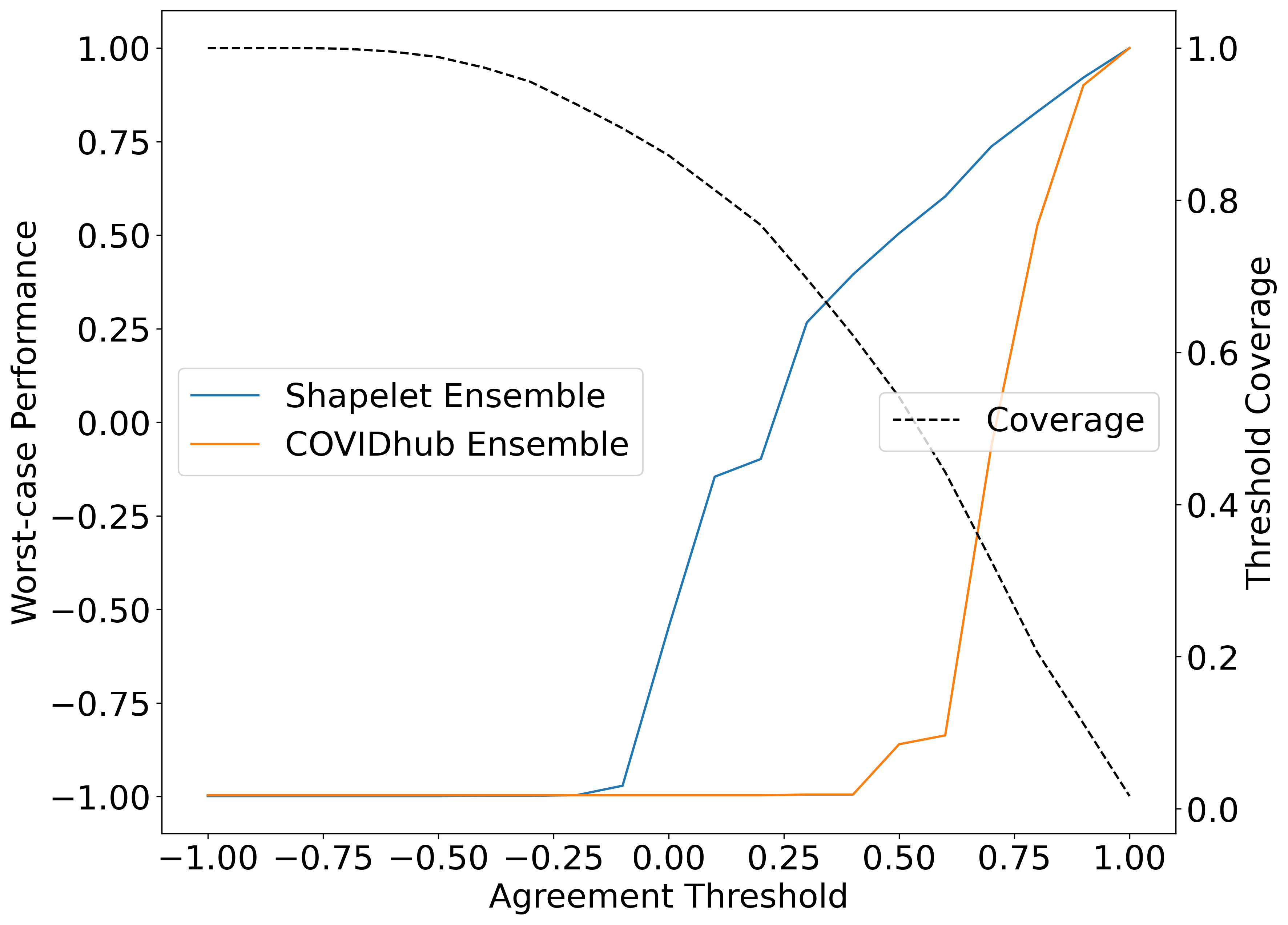}
         \caption{Deaths}
         \label{fig:wperf_vs_ag_deaths}
     \end{subfigure}
        \caption{Worst-case performance observed for a given threshold of Inter-model Agreement.}
        \label{fig:wperf_ag}
\end{figure*}

To further demonstrate our claim, we studied how the level of agreement affects performance in the worst cases. We fix a threshold for inter-model agreement. Among all the performance scores, we picked those obtained when the agreement was greater than the threshold. This gives us the worst-case performance observed for the given threshold of agreement. Figure~\ref{fig:wperf_ag} shows the results. We note that as agreement increases, the performance of both models improves. However, the worst-case performance of Shapelet Ensemble improves more rapidly. For instance, consider the inter-model agreement threshold of 0.25 for cases (Figure~\ref{fig:wperf_vs_ag_cases}). The fraction of times and states at which the agreement was above 0.25 is approximately 0.6, as indicated by the coverage curve. The worst performance at this threshold of agreement is close to 0.25 for Shapelet Ensemble and -1 for COVIDhub Ensemble.

\section{Conclusions and Future Work}
Infectious disease forecasting evaluation is traditionally done using measures that compare numerical predictions with ground truth. We proposed an evaluation based on shapes. To enable this evaluation, we define a set of shapes of interest called shapelets. We proposed a representation of a given shape in ``shapelet-space'' where each dimension represents the similarity of the given shape with one of the shapelets. We proved that two points are mapped close to each other in this representation if and only if the two shapes are similar. We demonstrated that the shapelet-space representation is able to capture the shapes of the trend well. Based on the representation we defined the shapelet-space score as the cosine similarity between the shapelet-space representation of the prediction and ground truth.

We transform the predictions of all the models submitted for COVID-19 cases and death forecasts on a given week into the shapelet-space representation. We took the average of the representation to obtain the shapelet-space ensemble. We demonstrated that this ensemble is able to capture the shape of future trends better than the COVIDhub ensemble model obtained by combining numerical forecasts of all the models. Particularly, for case forecasting during the time of changing trends, the shapelet ensemble is able to perform significantly better than COVIDhub ensemble. We also demonstrated that as the agreement among the models increases, the worst-case performance of the shapelet ensemble also increases. Thus, in prospective forecasting, agreement among the models can provide an early indicator of how reliable the forecasts are.  

We have only evaluated the shapes of the point forecasts. The real-time forecasting efforts also have probabilistic forecasts given in the form of quantiles. In future work, we will extend our methods to capture the expected shape generated by the probabilistic forecasts. We will also extend our analysis and evaluation to influenza forecasting.

\section*{Acknowledgement}

This work was supported by the Centers for Disease Control and Prevention and the National Science Foundation under the awards no. 2135784, and 2223933. Any opinions, findings, and conclusions or recommendations expressed in this material are those of the author and do not necessarily reflect the views of the National Science Foundation and the Center for Disease Control and
Prevention.

\bibliographystyle{IEEEtran}
\bibliography{epidemic,biblio}


\end{document}